\newif\ifnotes
\newtheorem{theorem}{Theorem}[section]
\newtheorem{lemma}[theorem]{Lemma}
\Crefname{importedtheorem}{Imported Theorem}{Imported Theorems}
\Crefname{theorem}{Theorem}{Theorems}
\Crefname{proposition}{Proposition}{Propositions}
\Crefname{claim}{Claim}{Claims}
\Crefname{lemma}{Lemma}{Lemmas}
\Crefname{conjecture}{Conjecture}{Conjectures}
\Crefname{corollary}{Corollary}{Corollaries}
\Crefname{construction}{Construction}{Constructions}
\Crefname{property}{Property}{Properties}
\theoremstyle{definition}
\newtheorem{definition}[theorem]{Definition}
\Crefname{definition}{Definition}{Definitions}
\Crefname{assumption}{Assumption}{Assumptions}
\Crefname{notation}{Notation}{Notations}
\theoremstyle{remark}
\Crefname{question}{Question}{Questions}
\Crefname{remark}{Remark}{Remarks}
\Crefname{comment}{Comment}{Comments}
\Crefname{fact}{Fact}{Facts}
\def\bbR{{\mathbb R}}
\newcommand{\R}{\bbR}
\newcommand{\PSPACE}{\mathsf{PSPACE}}
\newcommand{\FPSPACE}{\mathsf{PSPACE}}
\newcommand{\NP}{\mathsf{NP}}
\newcommand{\TFNP}{\mathsf{TFNP}}
\newcommand{\TFPSPACE}{\mathsf{TFPSPACE}}
\newcommand{\smmmove}{\textsc{SMMMove}}
\newcommand{\mpmove}{\textsc{MPMove}}
\newcommand{\dechex}{\textsc{DecisionHex}}
\newcommand{\decpos}{\textsc{DecisionPoset}}
\newif\ifdraft
\title{Strategy-Stealing is Non-Constructive}
\author[1]{Greg Bodwin\thanks{gregory.bodwin@cc.gatech.edu.  Supported in part by NSF awards CCF-1717349,  DMS-183932 and CCF-1909756.}}
\author[2]{Ofer Grossman\thanks{ogrossma@mit.edu. Supported by the Fannie and John Hertz Foundation fellowship, an NSF GRFP award, NSF CNS-1413920,  DARPA/NJIT  491512803, Sloan  Foundation 996698, and MIT/IBM W1771646. This work was done in part at the Simons Institute for the Theory of Computing.}}
\affil[1]{Georgia Tech}
\affil[2]{MIT}
\date{}
\begin{document}

\maketitle

\thispagestyle{empty}

\begin{abstract}
In many combinatorial games, one can prove that the first player wins under best play using a simple but non-constructive argument called \emph{strategy-stealing}.
This work is about the complexity behind these proofs: how hard is it to actually find a winning move in a game, when you know by strategy-stealing that one exists? 
We prove that this problem is $\PSPACE$-Complete already for \emph{Minimum Poset Games} and \emph{Symmetric Maker-Maker Games}, which are simple classes of games that capture two of the main types of strategy-stealing arguments in the current literature.
\end{abstract}







\clearpage

\pagenumbering{arabic}

\section{Introduction}

Theoretical Computer Science includes a rich theory of the complexity class $\TFNP$, defined as the set of $\NP$ search problems where a solution always exists.
The interesting subclasses of $\TFNP$ are based on simple yet non-constructive existence proofs for these solutions.
For example: given a circuit
$$C:\{0, 1\}^n \to \{0, 1\}^{n-1},$$
one sees immediately by the Pigeonhole Principle that there exist distinct inputs $x_1, x_2$ with matching output $C(x_1) = C(x_2)$.
But can one find such a pair of inputs computationally?
This problem is complete for a complexity class $\mathsf{PWPP} \subseteq \TFNP$, and a similar story holds for various other problems with other non-constructive proofs of solution existence.

A major motivation for $\TFNP$ as an object of study is that it gives satisfying formalizations of the natural question of whether a type of proof is constructive (``is the Pigeonhole Principle constructive?'' roughly corresponds to ``is $\mathsf{P} = \mathsf{PWPP}$?'').
But really, some non-constructive proof methods in mathematics do not correspond to $\NP$ search problems at all.
Thus, we argue, a valuable direction for research in the spirit of $\TFNP$ is to look outside $\TFNP$ itself to analyze the constructiveness of proofs in other complexity classes.
This paper is about one such instance: \emph{strategy-stealing proofs}, which are fundamental existence results in combinatorial game theory that naturally lie in $\PSPACE$.

\subsection{Combinatorial Games and Strategy-Stealing}

A \emph{combinatorial game} is a finite two-player game of perfect information.
The players take turns choosing \emph{moves} that manipulate a game board by some predefined rules, eventually reaching one out of a set of \emph{terminal states} which determine the outcome of the game.
Examples of combinatorial games include chess, go, tic-tac-toe, and some others that we will describe in detail shortly.

In general, deciding which player has a winning strategy in a combinatorial game is computationally hard.
However, certain classes of games  admit slick proofs that a certain player wins under best play (and thus determining who has a winning strategy in these games is not computationally hard).
A famous example is the game \emph{Hex}, in which two players named Red and Blue alternately color in hexagons in a symmetric board (pictured in Figure \ref{fig:hex}); Red wins if there is a continuous path of red hexagons connecting the top and bottom, and Blue wins if there is a continuous path of blue hexagons connecting the left and right.
The \emph{Hex Theorem} states that exactly one of the two players will achieve a winning configuration once all hexagons have been colored, so there are no draws.
It was observed by Nash \cite{nashhex} that:

\begin{figure}[t]
    \centering
    \includegraphics[scale=0.5]{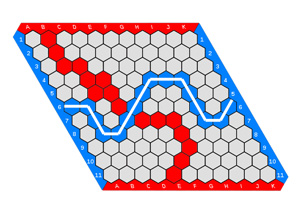}
    \caption{A Hex board with a winning configuration for Blue \cite{hexpic}.}
    \label{fig:hex}
\end{figure}

\begin{theorem} [\cite{nashhex}] \label{thm:hexwin}
The first player has a winning strategy in the game of Hex.
\end{theorem}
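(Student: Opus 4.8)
The plan is to use a \emph{strategy-stealing} argument, the prototypical non-constructive proof that this paper studies. First I would establish two preliminary facts. (i) \emph{Determinacy:} Hex is a finite two-player game of perfect information with no hidden randomness, so by Zermelo's theorem exactly one player has a winning strategy, \emph{provided} there are no draws. (ii) \emph{No draws:} this is precisely the Hex Theorem quoted above --- once every hexagon is colored, exactly one of Red and Blue has formed a winning path. Together these guarantee that \emph{some} player has a winning strategy; it then remains only to rule out that this player is the second player.

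Suppose toward contradiction that the second player has a winning strategy $S$. The first player will \emph{steal} it: on the opening move, color an arbitrary hexagon $h_0$, and thereafter pretend to be the second player responding via $S$ to the opponent's moves. The one hitch is that $S$ may instruct the first player to color a hexagon that is already his (namely $h_0$, or later an analogous ``spare''); whenever this happens, the first player instead colors a fresh arbitrary hexagon and treats \emph{that} as his new spare. In this way the first player always maintains a legal play in which his set of colored hexagons is exactly the set dictated by $S$ together with one extra hexagon of his own color.

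The crux --- and the step I expect to be the main obstacle to make fully rigorous --- is the \emph{monotonicity} of Hex: owning an additional hexagon of your own color can never turn a win into a loss, since any winning path for a player remains a winning path when more hexagons of that player's color are added to the board. Using this, the spare hexagon can only help the first player, so the simulated play is winning for him whenever $S$ would have been winning for the second player. But $S$ is assumed to win against \emph{every} first-player strategy, so this stolen play wins for the first player --- contradicting that the very same completed board was supposed to be a win for the (original) second player. Hence the second player has no winning strategy.

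Concluding, by determinacy together with the absence of draws, the player holding a winning strategy must be the first player, which proves the theorem. In a full write-up I would take care with the bookkeeping of the move-for-move simulation (matching each opponent move to the reply prescribed by $S$, and reassigning the spare whenever $S$ points at an already-owned cell) and would state the monotonicity claim as a clean standalone lemma, since these are exactly the two places where an informal strategy-stealing sketch can conceal a genuine gap.
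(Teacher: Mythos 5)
Your proposal is correct and is exactly the paper's argument: the paper's proof sketch is the same strategy-stealing contradiction, relying implicitly on the Hex Theorem (no draws) and determinacy that you make explicit, and on the monotonicity/``extra move can only help'' fact that you isolate along with the spare-cell bookkeeping. You have simply filled in the details the paper leaves as a sketch.
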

\begin{proof} [Proof Sketch]
Suppose for contradiction that the second player has a winning strategy.
The first player can then make an arbitrary first move and then ``steal'' the winning strategy of the second player. That is, he will now pretend he is the second player, and play according to the second player's winning strategy. This will lead to a win for the first player anyways since their arbitrary initial move can only help them achieve a winning configuration.
\end{proof}
This has been dubbed the first \emph{strategy-stealing} proof, referring to a now-broad collection of proofs that assume for contradiction that the second player can win, then repurpose the winning strategy to create a win for the first player.
Another illustrative example is the game \emph{Chomp}.
Here, the game board is an $m \times n$ chocolate bar in which the top right square has been poisoned.
The players alternately choose an uneaten square, and then eat that square and all remaining squares down and to the left.
A player loses if they eat the poisoned square.

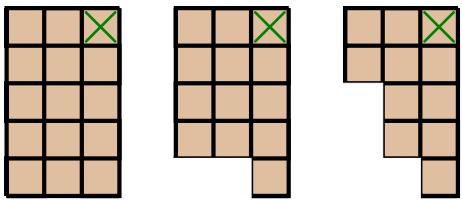
\begin{figure}[h!]
\begin{center}
\begin{tikzpicture} [scale=0.5]
\begin{scope}
\draw [fill=brown!50] (0, 0) -- (3, 0) -- (3, 5) -- (0, 5) -- cycle;
\draw [ultra thick] (3, 0) -- (0, 0) -- (1, 0) -- (1, 5) -- (2, 5) -- (2, 0) -- (3, 0) -- (3, 5);
\draw [ultra thick] (0, 5) -- (0, 0) -- (0, 1) -- (3, 1) -- (3, 2) -- (0, 2) -- (0, 3) -- (3, 3) -- (3, 4) -- (0, 4) -- (0, 5) -- (3, 5);
\node at (2.5, 4.5) {\Huge $\color{green!50!black}\mathbf{\times}$};
\end{scope}

\begin{scope}[shift={(4.5, 0)}]
\draw [fill=brown!50] (0, 0) -- (3, 0) -- (3, 5) -- (0, 5) -- cycle;
\draw [ultra thick] (3, 0) -- (0, 0) -- (1, 0) -- (1, 5) -- (2, 5) -- (2, 0) -- (3, 0) -- (3, 5);
\draw [ultra thick] (0, 5) -- (0, 0) -- (0, 1) -- (3, 1) -- (3, 2) -- (0, 2) -- (0, 3) -- (3, 3) -- (3, 4) -- (0, 4) -- (0, 5) -- (3, 5);
\node at (2.5, 4.5) {\Huge $\color{green!50!black}\mathbf{\times}$};

\draw [white, fill=white] (-1, -1) -- (2, -1) -- (2, 1) -- (-1, 1) -- cycle;
\end{scope}

\begin{scope}[shift={(9, 0)}]
\draw [fill=brown!50] (0, 0) -- (3, 0) -- (3, 5) -- (0, 5) -- cycle;
\draw [ultra thick] (3, 0) -- (0, 0) -- (1, 0) -- (1, 5) -- (2, 5) -- (2, 0) -- (3, 0) -- (3, 5);
\draw [ultra thick] (0, 5) -- (0, 0) -- (0, 1) -- (3, 1) -- (3, 2) -- (0, 2) -- (0, 3) -- (3, 3) -- (3, 4) -- (0, 4) -- (0, 5) -- (3, 5);
\node at (2.5, 4.5) {\Huge $\color{green!50!black}\mathbf{\times}$};

\draw [white, fill=white] (-1, -1) -- (2, -1) -- (2, 1) -- (1, 1) -- (1, 3) -- (-1, 3) -- cycle;
\end{scope}
\end{tikzpicture}
\end{center}
\caption{\label{fig:chomp} A valid two-move sequence from the starting position in $5 \times 3$ Chomp.}
\end{figure}

The following strategy stealing argument applies to Chomp:
\begin{theorem} [Folklore] \label{thm:chompwin}
The first player has a winning strategy in the game of Chomp.
\end{theorem}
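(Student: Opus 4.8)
The plan is to mirror the strategy-stealing template used for \Cref{thm:hexwin}, resting on two ingredients: (i) Chomp is a finite, draw-free game of perfect information, so exactly one of the two players has a winning strategy; and (ii) a strategy-stealing step ruling out the second player, which then forces the winner to be the first player.

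For ingredient (i), I would first note that each move strictly decreases the number of uneaten squares, so play always terminates, and that a terminal position is reached precisely when some player is forced to eat the poisoned square. Since that player loses and the other wins, there are no draws, and determinacy of finite perfect-information games (Zermelo's theorem) yields that one of the players has a winning strategy. For ingredient (ii), suppose toward contradiction that the second player has a winning strategy $S$. Let $c$ be the corner square opposite the poison (the bottom-left square), and consider the opening in which the first player eats only $c$: this is a legal move distinct from eating the poison, and because nothing lies down-and-to-the-left of $c$, it removes exactly the single square $c$. Let $p$ be the square that $S$ prescribes as the second player's reply, and let $P$ be the resulting position.

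The crucial observation is that $c$ lies down-and-to-the-left of \emph{every} square, so the move ``choose $p$'' already eats $c$ on its own; consequently $P$ is exactly the position obtained by eating the down-left rectangle of $p$ from the full board, whether or not $c$ was eaten first. This lets the first player steal $S$: the very same position $P$ is reachable in one move by opening with ``choose $p$.'' In the line following $S$, position $P$ arises with the first player to move and is losing for the player to move (that is how $S$ wins for the second player). But if the first player instead opens with $p$, then $P$ arises with the second player to move, so now the second player occupies the losing side and the first player wins. This contradicts the assumption that $S$ is a winning strategy for the second player, and combined with (i) it forces the first player to have a winning strategy.

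The step I expect to require the most care in stating cleanly is the position-identification in the second ingredient: I must argue rigorously that eating $c$ first and then choosing $p$ produces literally the same board as choosing $p$ outright, which hinges on $c$ being dominated by (down-and-to-the-left of) every admissible $p$. I should also dispose of the degenerate case where the board is a single square, so that the opening move ``eat only $c$'' is genuinely available and is not itself the losing move of eating the poison.
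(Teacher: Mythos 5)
Your proof is correct and is essentially the paper's argument: both hinge on the observation that, because the bottom-left corner lies below every other square, the position reached by eating the corner and then the second player's winning reply $p$ is identical to the position reached by opening with $p$ directly, so the first player can steal that reply as a winning first move. You merely make explicit two things the paper leaves implicit, namely determinacy via Zermelo's theorem and the degenerate $1\times 1$ board.
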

\begin{proof} [Proof Sketch]
Consider the possible first move where the first player chomps off only the bottom-left-most square.
There are two cases.
Maybe the second player does not have a winning response, in which case the game is a win for the first player.
Alternately, suppose the second player has a winning response by chomping off an $a \times b$ block.
Since this block necessarily contains the bottom-left-most square, the board state is the same as if an $a \times b$ block had been chomped off with the first move of the game.
It follows that this $a \times b$ chomp, instead, would have been a winning first move for the first player.
\end{proof}

Both of the proofs above \textit{seem} non-constructive, in the sense that they do not yield an actual winning first move.
Our central research question is whether this is inherent:
\begin{center}
    \emph{Are strategy-stealing proofs constructive?}
\end{center}

To tackle this problem, we consider games that admit strategy stealing proofs, and we investigate the computational hardness of finding winning moves in such games.
With this view, one can see that strategy-stealing proofs can essentially be \emph{arbitrarily} non-constructive: for any combinatorial game $X$ with two players P1 and P2, we can define a game $X'$ in which the first player can decide whether he wishes to play as P1 or P2 in game $X$.\footnote{Notice the strategy stealing argument that the first player has a winning strategy in the game $X'$: suppose otherwise. Then we know if the first player chooses to play as P1, the second player has a winning strategy, so P2 has a winning strategy in $X$. But then the first player can choose to play as P2 in $X'$ and use the winning strategy for P2 in $X$.} Then finding a winning move for P1 is the same as determining the winner of $X$, which in general is computationally hard.
(We discuss this point in a little more detail in the conclusion.)

Thus, a more interesting direction is not to proceed in the maximally general case, but rather to investigate whether hardness persists in special classes of games to which strategy-stealing applies.
In particular, we will study games which (to our eye) are the minimal natural classes captured by the two strategy-stealing arguments given above.

\subsection{Our Results}

To capture ``Hex-type strategy stealing,'' we consider the well-studied class of \emph{symmetric Maker-Maker games}:
\begin{definition} [Symmetric Maker-Maker Game\footnote{This is a generalization of the usual definition: in the literature, a ``Maker-Maker game'' often implies $W_1 = W_2$.}]
In a \emph{Maker-Maker Game}, two players alternately claim elements of a finite universe $U$.
There are families of \emph{winning sets} $W_1, W_2 \subseteq P(U)$; the first player wins as soon as they claim all the elements of any winning set $S \in W_1$, the second player wins as soon as they claim all the elements of any winning set $S \in W_2$, and the game is a draw if all of $U$ is claimed without either player winning.
The game is \emph{symmetric} if $W_1, W_2$ are isomorphic, i.e., there is a permutation $\pi$ of $U$ and a bijection $\phi : W_1 \to W_2$ such that for all $S_1 = \{s_1, \dots, s_k\}\in W_1$, we have $\phi(S_1) = \{\pi(s_1), \dots, \pi(s_k)\}$. 
\end{definition}

Hex is a symmetric Maker-Maker (SMM) game, and indeed the proof of Theorem \ref{thm:hexwin} generalizes immediately to imply that any SMM game is not a win for the second player. 
There are many other examples of SMM games, which will be surveyed later.
In general an SMM game can be a draw under best play, but some games like Hex are \emph{draw-free} and so a first-player win is the only remaining possibility.
We associate a computational problem to these games as follows:

\begin{definition} [$\smmmove$] \label{def:smmmove}
The problem $\smmmove$ is defined as follows:
\begin{itemize}
    \item Input: Circuits $C_1, C_2$, both with input wires labelled $x_1, \dots, x_n$. $C_1$ and $C_2$ are the same up to relabelling of the wires. Call $X = \{x_1, \dots, x_n\}$.  The Maker-Maker game associated with this input is where $W_1$ contains any set of inputs $Y \subseteq X$ for which $C_1$ evaluates to true when the inputs in $Y$ are set to true and $X \setminus Y$ to false. $W_2$ is defined similarly with respect to $C_2$. 
    \item Output: any optimal first move for the first player in the associated game.
\end{itemize}
\end{definition}


We prove:
\begin{theorem} \label{thm:smhard}
$\smmmove$ is $\PSPACE$-hard,\footnote{A straightforward algorithm solves $\smmmove$ in polynomial space, so in some sense it is complete (ignoring subtleties in the terminology), but we will not discuss these easy upper bounds in this paper.  A similar comment holds for $\mpmove$ below.} even under the additional promise that the input defines a draw-free game with $W_1 = W_2$.
\end{theorem}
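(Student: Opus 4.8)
The plan is to reduce from $\tqbf$ (or any $\PSPACE$-complete problem phrasable as deciding the winner of a game), exploiting the meta-argument sketched in the introduction's footnote: if determining the \emph{winner} of some draw-free game $X$ is $\PSPACE$-hard, and we can wrap $X$ inside a symmetric Maker-Maker game $X'$ whose opening move amounts to the first player choosing which side of $X$ to adopt, then by strategy stealing $X'$ is a first-player win, and any optimal opening move of $X'$ reveals the winner of $X$. Thus the reduction splits into two conceptual layers: (i) manufacture a hard, draw-free base game that can be phrased in the circuit--Maker--Maker format of Definition \ref{def:smmmove}, and (ii) glue a symmetric, draw-free ``side-selection'' shell around it so that the promise $W_1 = W_2$ holds and the winning opening move encodes one bit of the $\PSPACE$-hard answer.

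For layer (i), I would start from a quantified Boolean formula $\Phi = \exists x_1 \forall x_2 \cdots Q x_n\, \phi$ and build a Maker-Maker game simulating the associated alternating assignment game. The standard device is to attach \emph{threat gadgets} --- small winning sets completable in a single move --- that force the players to claim the variable-elements in the prescribed order, so that any faithful play corresponds to an assignment-by-assignment play of the QBF game; the circuit $C$ defining the winning family then evaluates $\phi$ on the claimed literals, together with auxiliary bookkeeping elements that make the final outcome definite. The output of this layer is a draw-free game $X$ whose first-player-win status equals the truth of $\Phi$, hence whose winner is $\PSPACE$-hard to determine.

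For layer (ii), I would realize the side-selection shell symmetrically by taking two mirrored copies of the board joined by a shared selector gadget, with the relabelling $\pi$ swapping the copies so that $W_1 = W_2$ holds by construction; the first player's opening move commits the selector to one copy, fixing whether he henceforth plays as P1 or P2 inside $X$. Additional ``filler'' winning sets are added so that every completed play of $X'$ produces a winner for exactly one side, enforcing draw-freeness. By the Hex-style argument (Theorem \ref{thm:hexwin}) the second player cannot win $X'$, and combined with draw-freeness the first player wins; the particular opening move that wins therefore certifies which side of $X$ is victorious, i.e., the truth value of $\Phi$, completing the reduction.

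The main obstacle is reconciling the three simultaneous demands on $X'$ --- exact symmetry ($W_1 = W_2$), draw-freeness, and faithful simulation of the asymmetric ``who wins $X$'' question --- within the rigid Maker-Maker format. Symmetry insists the two players' winning conditions be literally isomorphic, which is in tension with a game whose whole purpose is to distinguish a P1-win from a P2-win; meanwhile the threat and filler gadgets must be designed so that they neither create unintended winning sets that corrupt the reduction nor allow a player to stall into a draw. I expect the bulk of the technical effort, and the most error-prone bookkeeping, to lie in verifying that these gadgets preserve both the symmetry involution $\pi$ and the invariant that a winning opening move exists if and only if it points to the victorious side of $X$.
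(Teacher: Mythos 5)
Your high-level plan --- wrap a $\PSPACE$-hard base game in a symmetric ``side-selection'' shell so that the optimal opening move reveals the winner of the base game --- is the same skeleton the paper uses. But your layer (i) is both unnecessary and under-supported: the paper simply takes Hex from a partially filled board as the base game (already known to be $\PSPACE$-complete), whereas building a Maker--Maker game from a QBF obliges you to \emph{prove} that the resulting game is draw-free, a global no-draw property of Hex-Theorem difficulty that unspecified ``threat'' and ``filler'' gadgets do not deliver; adding winning sets to force draw-freeness can easily change who wins. The more serious gap is in layer (ii), and it is exactly the issue you defer to ``error-prone bookkeeping'': a bare selector gadget does not work. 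If the base position is one where \emph{whoever moves next} wins, then after the selector exchange \emph{both} choices of side are winning opening moves, so the output of $\smmmove$ fails to determine the winner of the base game and the reduction collapses. The paper explicitly exhibits this failure for the naive one-copy construction before giving the real one.

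The paper's fix is not bookkeeping but a specific asymmetric amplification. The universe contains \emph{two} copies $Q_1, Q_2$ of the input position plus selector elements $r, b$, and the winning sets --- taken to be literally the same family for both players, which is how the promise $W_1 = W_2$ is met (not via a copy-swapping involution $\pi$, which would only give $W_1 \cong W_2$ unless the family happens to be swap-invariant) --- are: $r$ together with a Red-completing set on \emph{either} board, $b$ together with a Blue-completing set on \emph{both} boards, and $\{r,b\}$. The either/both asymmetry is what neutralizes the tempo ambiguity: one then proves that $b$ is the \emph{unique} winning opening move if and only if Blue wins $Q$, and draw-freeness of the wrapper follows from the Hex Theorem applied to each board separately. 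Until you specify a mechanism playing the role of this amplification, your construction does not establish that the winning opening move encodes the answer, which is the entire content of the theorem.
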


Thus, Hex-type strategy stealing is a formally non-constructive style of proof, and additional draw-freeness results like the Hex Theorem do not generally help.
To capture ``Chomp-type strategy stealing,'' we consider:

\begin{definition} [Minimum Poset Games] \label{def:posetgame}
In a \emph{poset game}, two players alternately choose remaining elements of a poset $P$, removing the chosen element and all lesser elements at each step.
A player loses if it is their turn but the poset is empty.
The game is \emph{minimum} if $P$ has a minimum element (i.e., $m \in P$ that is comparable to and less than every other element in $P$).
\end{definition}

(For both these types of games, we refer to \cite{survey1, survey2} for some of their history and prior work.)
Chomp is a minimum poset game, where the associated poset holds the squares of the chocolate bar, with the poisoned square removed, and squares are compared by the usual poset relation on $\mathbb{Z}^2$ (note that the bottom-left-most square is a minimum element).
Theorem \ref{thm:chompwin} generalizes to show that any minimum poset game is a win for the first player.
Other examples of poset games, which may or may not have a minimum, include Nim, Hackendot, certain cases of Hackenbush, and many others.
Computationally, we have:

\begin{definition} [$\mpmove$]
The problem $\mpmove$ is defined as follows:
\begin{itemize}
    \item Input: a poset $P$ (with elements and relations between them enumerated explicitly) with a minimum element.
    \item Output: any winning move for the first player in the poset game defined by $P$.
\end{itemize}
\end{definition}

\begin{theorem} \label{thm:mphard}
$\mpmove$ is $\PSPACE$-hard.
\end{theorem}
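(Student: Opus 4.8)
The plan is to reduce from the problem of deciding the winner of an \emph{arbitrary} poset game, in which no minimum element is required; this problem is known to be $\PSPACE$-complete (see \cite{survey1, survey2}). Given such an instance $Q$, I would build a minimum poset $\widehat{P}$ by adjoining a single new element $m$ lying strictly below every element of $Q$: formally $\widehat{P} := Q \cup \{m\}$ with $m < x$ for all $x \in Q$ and all other relations inherited from $Q$. Then $m$ is a minimum of $\widehat{P}$, so $\widehat{P}$ is a legal $\mpmove$ instance, and by the generalization of Theorem~\ref{thm:chompwin} the first player wins the game on $\widehat{P}$, so a winning first move is guaranteed to exist.

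The heart of the argument is to show that a single winning move in $\widehat{P}$ already reveals whether $Q$ is a win or a loss for the player to move. Two observations do the work. First, choosing $m$ deletes only $m$ (the set of elements $\le m$ is just $\{m\}$) and therefore leaves exactly the position $Q$. Second, for any $x \in Q$ the set of elements $\le x$ in $\widehat{P}$ is $\{m\} \cup \{y \in Q : y \le x\}$, so choosing $x$ leaves exactly the position reachable from $Q$ by the legal move $x$. I then invoke the fundamental dichotomy of normal-play games: a position is a loss for the player to move if and only if every available move leads to a position that is a win for the player to move, and it is a win for the player to move if and only if some move leads to a loss.

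Combining these yields the two cases I want. If $Q$ is a loss for the player to move, then choosing $m$ hands the opponent the losing position $Q$, so $m$ is a winning move; moreover every move $x \in Q$ hands the opponent a one-move descendant of $Q$, which (by the dichotomy applied to the losing position $Q$) is a win for the player to move, so each such move is losing. Hence $m$ is the \emph{unique} winning first move. If instead $Q$ is a win for the player to move, then choosing $m$ hands the opponent the winning position $Q$, so $m$ is losing, and since a winning move nevertheless exists (by the minimum-poset theorem) it must be some $x \in Q$. Thus an $\mpmove$ solver run on $\widehat{P}$ returns $m$ precisely when $Q$ is a second-player win and returns an element of $Q$ precisely when $Q$ is a first-player win, which decides the $\PSPACE$-complete problem and establishes Theorem~\ref{thm:mphard}.

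I expect the main conceptual obstacle to be exactly that a search oracle may return \emph{any} winning move, so one cannot naively read an answer off an arbitrary response. The resolution, which is the crux of the proof, is the observation that adjoining a global minimum forces the winning-move set to collapse to the single element $m$ precisely in the losing case (because from a losing position every move is a winning move for the opponent), so the oracle's output is forced to be informative. The remaining points — that $\widehat{P}$ has polynomial size, that the ideal computations are as claimed, and that the arbitrary-poset-game winner problem is $\PSPACE$-complete — are routine.
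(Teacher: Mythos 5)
Your proposal is correct and matches the paper's proof essentially step for step: the same construction (adjoin a global minimum $m$ below the given poset), the same observation that choosing $m$ passes the original position to the opponent while any other move lands on a one-move descendant, and the same crucial uniqueness argument showing the oracle is forced to answer $m$ exactly when the original game is a second-player win. The only cosmetic difference is the citation for the $\PSPACE$-completeness of deciding general poset games (the paper uses \cite{posetpspace}).
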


From a technical standpoint, both Theorems \ref{thm:smhard} and \ref{thm:mphard} are proved roughly as follows.
We start with a theorem in prior work stating that it is $\PSPACE$-hard to decide the winner in a related class of games: Hex from an arbitrary starting position \cite{hexpspace}, or a certain class of poset games \cite{posetpspace}.
We then apply transformations that introduce the necessary strategy-stealing properties to these games while arguing that the winner in the original game is implicitly encoded by the first player's winning move(s).
In the case of minimum poset games, this is an easy extension of the theorem in \cite{posetpspace}; we include this mostly to illustrate our conceptual goal of computationally formalizing non-constructiveness.
For SMM games, the transformation is nontrivial and requires significant new ideas.
Thus, the SMM result constitutes our main technical contribution.

\subsection{Related Work}

As mentioned, this paper is conceptually related to the study of the complexity class $\TFNP$, defined in \cite{tfnpdef} and including notable subclasses $\mathsf{PPAD, PPA, CLS, PPP, PWPP, PLS}$, among others.
These classes all hold search problems in $\NP$ that admit proofs that a solution always exists.
Our work is related in that our goal is to prove hardness of searching for a winning moves in games, when there are strategy-stealing proofs that one always exists.
The key difference is that our problems are not in NP; there is not generally a short certificate that an optimal move is indeed the first one in some optimal strategy.

In \cite{makermakerhard}, the author proves that it is $\PSPACE$-hard to decide whether an SMM game is a win for the first player or a draw.
Our work differs in that (1) to minimally generalize Hex we restrict attention to \emph{draw-free} games, in which this decision problem is trivial, and (2) we are interested in constructively finding a winning move rather than deciding existence.
Similarly related is \cite{hexpspace}, in which it is proved to be $\PSPACE$-hard to decide whether Hex from a partially-filled board is a win for the first or second player.
This can be viewed as a Maker-Maker game, but since the board is partially filled, it is not generally a symmetric Maker-Maker game and thus strategy-stealing does not apply.

There is a rich and developed theory of Maker-Maker games, poset games, and variants, most of which focuses on understanding these games under best play (rather than computational aspects of playing the games).
See books \cite{survey1, survey2} for more information.

\section{Non-Constructiveness of Strategy-Stealing}\label{mainbody}

We will now prove our main results.

\subsection{Symmetric Maker-Maker Games}




Our first topic will be Symmetric Maker-Maker games, and eventually a proof of Theorem \ref{thm:smhard}.

\paragraph{Examples.}

We first survey some famous examples of SMM games in the literature.

\begin{itemize}
\item Hex is an SMM game, as discussed above, which is draw-free and has non-equal winning sets.

\item In the \emph{$(n, k)$-Clique game}, the game board is a complete graph on $n$ nodes and the players take turns claiming its edges.
The first player to claim all edges in a $k$-clique wins.
This is a symmetric Maker-Maker game, even with identical winning sets $W_1 = W_2$ (i.e., the underlying permutation is the identity).
An interesting property of this game is that, for all $k$, if $n$ is sufficiently large then the game is draw-free.
This follows from \emph{Ramsey's Theorem}, which states that any $2$-coloring of the edges of the complete graph has a monochromatic clique of size $\Omega(\log n)$.
Hence, for large enough $n$, strategy-stealing implies specifically that the first player has a winning strategy.
For work on the Clique game and some natural variants, see e.g., \cite{vdwgame, gebauer2012clique, erdos1973combinatorial, beck2002positional, kusch2017problems}.

\item Tic-Tac-Toe is an SMM game, where the winning sets are the 8 possible ``lines'' in the $3 \times 3$ grid.
This game is not draw-free.

\item In $(k, d)$ Tic-Tac-Toe, the game board is the elements of the $k^d$ hypercube ($\{1, 2, \ldots, k\}^d$), and the winning sets $W_1 = W_2$ are the $k$-element subsets which are colinear in the hypercube.
The \emph{Hales-Jewett Theorem} \cite{tictactoe} implies that for every $k$, if $d$ is sufficiently large then the game is draw-free.
For work on this game, see e.g., \cite{survey1, Golomb2002}.

\item In the \emph{$(n, k)$ Arithmetic Progression game}, the universe is the set of integers $\{1, \dots, n\}$, and the winning sets are any $k$ elements that form an arithmetic progression (i.e., the difference between successive integers is equal).
\emph{Van-der-Waerden's Theorem} \cite{vdw} implies that for every $k$, if $n$ is sufficiently large then the game is draw-free.
The Arithmetic Progression game has been studied e.g., in \cite{kusch2017random, vdwgame, kusch2017problems}.
\end{itemize}

All of these games are SMM and hence admit strategy-stealing proofs that the second player does not win under best play.
The problem of finding an optimal first move for the first player can thus be captured as a special case of $\smmmove$.
All of these games are draw-free in the appropriate range of parameters (except standard Tic-Tac-Toe), and thus here they even fit the promise that the input to $\smmmove$ defines a draw-free game.
Of course, these special cases need not be as hard as $\smmmove$: for example, it is trivial to find a winning first move in the $(n, k)$-Clique game, since by symmetry of the game board all first moves are equivalent.
(Perhaps a more interesting version of the Clique game computational problem is to determine the first player's optimal move on their second turn, since the game necessarily still retains the symmetry needed for a strategy-stealing argument after each player claims only one edge.)

\paragraph{Hardness for $\smmmove$.}

We now show computational hardness for $\smmmove$. We first outline the proof ideas, and then provide a full proof.
Our starting point is the following result from prior work:
\begin{definition}[$\dechex$]
The problem $\dechex$ is defined as follows:
\begin{itemize}
    \item Input: a partially-filled Hex board $Q$
    \item Output: does Red have a winning strategy in the Hex game starting from $Q$ (assuming it is currently Red's turn to play)?
\end{itemize}
\end{definition}

\begin{theorem}[\cite{hexpspace}] \label{thm:hexpspace}
$\dechex$ is $\PSPACE$-complete.
\end{theorem}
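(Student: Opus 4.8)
The plan is to establish the two directions of completeness separately. For membership in $\PSPACE$, I would observe that every play of Hex from a partially-filled board $Q$ terminates after at most the number of remaining empty cells, which is polynomial in the size of $Q$; hence the game tree has polynomial depth. Deciding the winner is then a standard recursive minimax evaluation (alternating ``there exists a Red move'' with ``for all Blue replies''), which a depth-first traversal can carry out while reusing space along a single root-to-leaf branch, yielding a polynomial-space algorithm. The only mild subtlety is the terminal test — whether a completed coloring contains a Red top-to-bottom path — which is a polynomial-time connectivity check.

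For $\PSPACE$-hardness, I would reduce from a game already known to be $\PSPACE$-complete on general (non-planar) structures and then realize it inside the rigid hexagonal geometry. A convenient starting point is the vertex connection game on arbitrary graphs (``generalized Hex''), shown $\PSPACE$-complete by Even and Tarjan: the players alternately claim vertices, one trying to link two distinguished terminals by a claimed path and the other trying to block. Alternatively one could start from $\tqbf$ (equivalently, the quantified-Boolean-formula game) and build variable, clause, and quantifier-alternation gadgets directly. In either case the goal is a polynomial-time map from source instances to partially-filled boards $Q$ such that Red has a winning strategy from $Q$ if and only if the connecting/existential player wins the source game.

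The heart of the construction is a library of board gadgets: (i) \emph{wires}, long channels of empty cells in which the only sensible play is a forced ``ladder,'' so that a connection threat is faithfully propagated along the channel; (ii) \emph{junction} gadgets that let a connection signal branch or merge, encoding the adjacency structure of the source game; and (iii) a \emph{crossover} gadget that lets two logically independent connection signals pass through the same board region without either player being able to hijack the other. I expect the crossover to be the main obstacle: since Hex is played on a planar board while the source game lives on an arbitrary graph, any reduction must route signals across one another, and in a connection game a careless crossing lets a player ``leak'' a path from one wire into its neighbor. Designing a crossover whose forced-play analysis provably isolates the two signals — and then verifying, via ladder-style forcing arguments, that optimal Hex play from $Q$ exactly mirrors optimal play in the source game — is the delicate technical core. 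Once the gadgets and their local correctness are established, composing them according to the source instance and tracing the global equivalence of winning strategies completes the hardness argument, which together with membership gives $\PSPACE$-completeness of $\dechex$.
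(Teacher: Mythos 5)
This statement is an imported result: the paper does not prove it, but cites it from \cite{hexpspace} (Reisch's theorem that Hex from an arbitrary position is $\PSPACE$-complete), so there is no in-paper proof to compare against. Your outline does match the standard literature route: membership via a polynomial-depth minimax search in polynomial space (correct, and the terminal connectivity test is indeed routine), and hardness via reduction from a $\PSPACE$-complete connection game such as Even--Tarjan's generalized Hex (or from $\tqbf$), realized on the board through wire, junction, and crossover gadgets.

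However, as a proof the hardness half has a genuine gap: everything that makes the theorem hard is named but not done. You correctly identify that the crossover gadget is the crux --- in a connection game on a planar board, a naive crossing lets one player's path leak into the other wire, and the published proof expends most of its effort constructing and verifying gadgets whose local forced-play (ladder) analysis provably isolates the two signals and then composing these local correctness claims into a global equivalence of winning strategies. Your proposal explicitly defers exactly this step (``designing a crossover \ldots is the delicate technical core''), so what you have is an accurate plan of attack and a correct identification of the main obstacle, not a proof. To close the gap you would need to exhibit the concrete partially-filled-board gadgets, prove each one's behavior under optimal play (including that neither player benefits from deviating out of the forced lines), and verify the polynomial-time composition --- or, in the context of this paper, simply cite \cite{hexpspace} as the authors do.
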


Our goal is to reduce $\dechex$ to $\smmmove$.
First, let us remark on why Theorem \ref{thm:hexpspace} does not \emph{directly} give hardness for $\smmmove$, given that Hex is an SMM game.
The result that $\dechex$ is hard means that there exist families of positions from which deciding the winner is hard.
However, these positions are not generally symmetric, so the game starting from these positions is not SMM.
Additionally, a talented player \emph{playing from the starting position} could still potentially be able to win the game while avoiding these hard settings of the game board, thus winning without ever really encountering a $\PSPACE$-complete problem.

So, we are given a partially filled (possibly asymmetric) board $Q$ on input to $\dechex$, representing a Hex game between Red and Blue where it is Red's turn to move, and our goal is to create a new SMM draw-free game between players First and Second that captures the structure of $Q$ in some useful way.
To build intuition, let us start with a first (incorrect) attempt at such a game $G$.
Suppose we add two new elements to the universe called $r$ and $b$.
The idea will be that claiming $r$ is choosing to play as Red in the Hex game defined by $Q$, and claiming $b$ is choosing to playing as Blue.
More formally, the winning sets $W_1 = W_2$ of the new SMM game would be:
\begin{itemize}
    \item $r$ and any set of hexagons that complete a win for Red in $Q$,
    \item $b$ and any set of hexagons that complete a win for Blue in $Q$, and
    \item $\{r, b\}$.
\end{itemize}
If $\smmmove(G) = r$, this solves $Q$: the second player must claim $b$ with their next move to block the winning set $\{r, b\}$, and then the game reduces to $Q$ itself where First plays as Red and Second plays as Blue.
Thus, if $r$ is a winning move for First, then $Q$ is a win for Red.
Unfortunately, the other cases of the proof break down.
For example: suppose the position on the board $Q$ is such that whoever has the next move wins (so $Q$ is a win for Red).
Then it is winning for First to claim \emph{either} $r$ or $b$ with their first move, meaning the output of $\smmmove(G)$ is not very informative.
Our fix is, intuitively, to amplify the game to avoid the possibility that the game winner depends on the turn order.

\begin{proof}[Proof of Theorem \ref{thm:smhard}]
We will prove that $\smmmove$ is hard by reducing $\dechex$ to it. Let $Q$ an instance of $\dechex$. We will construct a Symmetric Maker-Maker game $G$ which is draw-free and $W_1 = W_2$, such that finding a winning move in $G$ allows us to find who has a winning strategy in $Q$.

The universe of our new game $G$ will contain \emph{two} identical copies $Q_1, Q_2$ of the input to $\dechex$, as well as new elements $r, b$ like before.
The winning sets $W_1 = W_2$ in $G$ are (see Figure \ref{fig:winningsets}):
\begin{itemize}
    \item $r$ and any set of hexagons that completes a win for Red in \textit{either} $Q_1$ or $Q_2$,
    \item $b$ and any set of hexagons that completes a win for Blue in \textit{both} $Q_1$ and $Q_2$, and
    \item $\{r,b\}$.
\end{itemize}

\begin{figure}[h]
    \centering
    \begin{tikzpicture}
    \node at (0, 0) {\includegraphics[scale=0.3]{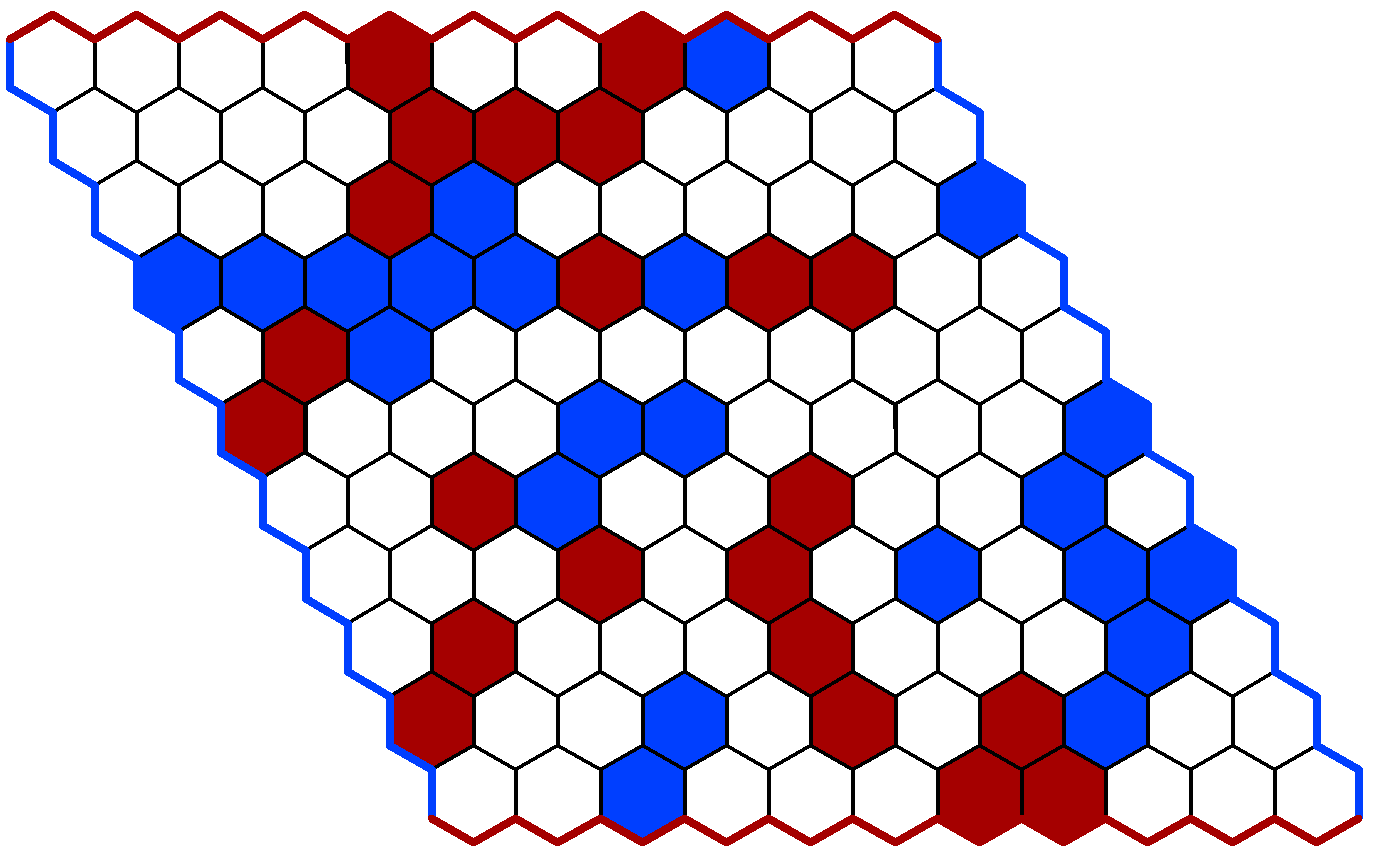}};
    \node at (8, 0) {\includegraphics[scale=0.3]{hexblank.png}};
    \node at (1, -2.5) {\Huge $Q_1$};
    \node at (9, -2.5) {\Huge $Q_2$};
    
    \node [red, align=center] at (1, 4) {Become Red\\~\\{\Huge \bf $r$}\\~\\(win on either board)};
    
    \node [blue, align=center] at (6, 4) {Become Blue\\~\\{\Huge \bf $b$}\\~\\(win on both boards)};
    
    \begin{scope}[shift={(1, 4.1)}]
    \newdimen\R
   \R=18pt
    \draw [red, ultra thick] (0:\R) \foreach \x in {60,120,...,360} {  -- (\x:\R) };
    \end{scope}
    
        \begin{scope}[shift={(6, 4.1)}]
    \newdimen\R
   \R=18pt
    \draw [blue, ultra thick] (0:\R) \foreach \x in {60,120,...,360} {  -- (\x:\R) };
    \end{scope}
    \end{tikzpicture}
    \caption{The universe used in our definition of an SMM game $G$.}
    \label{fig:winningsets}
\end{figure}

It is immediate that $G$ is an SMM game, since the winning sets are identical.
Additionally, we have:
\begin{lemma}
$G$ is draw-free.
\end{lemma}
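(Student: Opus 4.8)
The plan is to reduce draw-freeness to a purely combinatorial statement about complete colorings. Since a draw can only occur once the entire universe has been claimed, it suffices to show that \emph{every} partition of the universe between the two players gives at least one of them a winning set. The universe consists of the two special elements $r, b$ together with the (empty) hexagons of the two boards $Q_1, Q_2$; the pre-filled cells of $Q$ are fixed and simply count toward Red or Blue in each copy. So I would fix an arbitrary complete assignment of these elements to two players, call them $A$ and $B$, and argue that one of them has already formed a set in $W_1 = W_2$.

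First I would dispose of the easy case: if a single player holds both $r$ and $b$, that player has the winning set $\{r,b\}$ and we are done. Otherwise the two special elements are split, and without loss of generality $A$ holds $r$ while $B$ holds $b$. The key move is to reinterpret the partition as a pair of Hex colorings: in each board $Q_i$, treat the hexagons claimed by $A$ as Red and those claimed by $B$ as Blue, combined with the cells already colored in $Q$. Because every empty hexagon has been claimed by exactly one of $A, B$, this produces a \emph{complete} coloring of each board $Q_i$, so the Hex Theorem applies and tells us that exactly one of Red, Blue has a winning path in $Q_i$.

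The argument then hinges on the deliberate asymmetry between the ``either'' and ``both'' winning conditions. If Red wins in \emph{either} $Q_1$ or $Q_2$, then the $A$-claimed hexagons on that board complete a win for Red, so $A$ (who holds $r$) possesses a winning set of the first type and wins. In the remaining case Red wins on \emph{neither} board, which by the Hex Theorem forces Blue to win on \emph{both} boards; then the $B$-claimed hexagons complete a Blue win on both copies, so $B$ (who holds $b$) possesses a winning set of the second type and wins. These two subcases are exhaustive precisely because ``Red wins on at least one board'' and ``Blue wins on both boards'' are complementary once each board has a unique winner, so no full partition can leave both players without a winning set.

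I expect the main obstacle to be setting up the Hex-coloring interpretation cleanly enough that the Hex Theorem can be invoked per board, in particular bookkeeping the pre-filled cells of $Q$ correctly so that ``$A$'s hexagons plus fixed Red cells form a Red path'' really does coincide with ``completing a win for Red in $Q_i$'' as used in the definition of the winning sets. Once that correspondence is pinned down, exhaustiveness of the two subcases is an immediate consequence of the either/both asymmetry, and draw-freeness follows.
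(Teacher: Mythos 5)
Your proof is correct and follows essentially the same route as the paper's: handle the case where one player holds both $r$ and $b$ via the winning set $\{r,b\}$, then apply the Hex Theorem to each completed board and observe that ``Red wins at least one board'' and ``Blue wins both boards'' are complementary, matching the either/both asymmetry of the winning sets. The only differences are cosmetic (you phrase it as a partition between players $A,B$ where the paper uses a set $S$ and its complement).
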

\begin{proof}
Let $S$ be any subset of the universe in $G$.
We will show that either $S$ or its complement $S^C$ contains a winning set.
First, if $r, b \in S$ then $\{r, b\} \subseteq S$, or if $r, b \notin S$ then $\{r, b \} \subseteq S^C$.
So the nontrivial case is when $S$ contains exactly one of $r, b$; let us assume without loss of generality that $r \in S, b \notin S$ (else switch the roles of $S$ and $S^C$).
For either board $Q_i$, by the Hex Theorem and the fact that the union of $S_i, S^C_i$ covers the board, exactly one of the following two statements hold:
\begin{enumerate}
\item The elements of $S$ on the board $Q_i$ (call this $S_i$), combined with the elements on $Q_i$ which are initially marked red, form a winning configuration for red on $Q_i$.
\item
The elements of $S^C$ on the board $Q_i$ (call this set $S^C_i$), combined with the elements on $Q_i$ which are initially marked blue, form a winning configuration for blue on $Q_i$.
\end{enumerate}
Therefore, we conclude that in each board $Q_i$, either the elements of $S$ complete a win for Red (we call $Q_i$ a ``red'' board in this case), or the elements of $S^C$ complete a win for Blue (we call $Q_i$ a ``blue'' board in this case).
If at least one of $Q_1, Q_2$ is red, then $S$ contains $r$ and a Red winning set.
If both $Q_1, Q_2$ are blue, then $S^C$ contains $b$ and a Blue winning set on each board.
In either case the lemma holds.
\end{proof}

Our goal is now to show that the winning move(s) for First in $G$ completely determine the winner of $Q$.
We consider two cases:
\begin{lemma} \label{lem:bluewincase}
If Blue has a winning strategy in $Q$, then the unique winning move for First in $G$ is to claim $b$.
\end{lemma}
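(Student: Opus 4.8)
The plan is to establish two separate facts: that claiming $b$ is a winning opening move for First, and that every other opening move loses. Since $G$ is draw-free and admits a strategy-stealing proof that Second cannot win (so First does have a winning strategy), proving the second fact amounts to exhibiting, for each non-$b$ opening, an explicit winning strategy for Second. Two ingredients will drive all cases. The first is a structural ``$\{r,b\}$ threat'': once a player holds exactly one of $r,b$, if the opponent ever leaves the other element unclaimed, that player grabs it and immediately completes the winning set $\{r,b\}$. Hence claiming one of $r,b$ forces the opponent to claim the other on the very next move, after which the remaining play is a contest on the two hexagon boards between a ``Red'' side (seeking a Red connection on either board) and a ``Blue'' side (needing Blue connections on both boards). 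The second ingredient is a two-board sub-lemma: a player who has a single-board Hex winning strategy as the responder can, when acting as the responder in the two-board game, force a win on both boards simultaneously by running an independent copy $\sigma$ of that strategy on each board and replying on whichever board the opponent just played.

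For the first fact, suppose First opens with $b$. By the $\{r,b\}$ threat, Second must reply with $r$, committing First to Blue and Second to Red, with First to move. Since Blue has a winning responder strategy $\sigma$ on a single board (recall that in $Q$ it is Red's turn and Blue wins), First applies the two-board sub-lemma to connect Blue on both $Q_1$ and $Q_2$; the spare tempo from moving first is absorbed by a throwaway move, which by the monotonicity of Hex (extra owned cells never hurt, exactly as in the proof of \Cref{thm:hexwin}) does not disturb $\sigma$. Once Blue connects both boards, First completes the winning set ``$b$ together with a Blue win on both boards''. Moreover, by the Hex Theorem, Blue connecting a board means Red does not, so Second $=$ Red never completes a Red connection on either board and hence never completes the only winning set available to it. Thus $b$ is winning for First.

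For uniqueness I show each alternative opening is answered by Second playing $b$. If First opens with $r$, Second replies $b$; now Red (First) moves first on the boards while Blue (Second) responds, which is precisely the single-board hypothesis applied on each board, so the sub-lemma gives Blue connections on both boards and Second wins. If First instead opens with a hexagon $h$, Second again replies $b$; First is now forced either to grab $r$ --- in which case Second $=$ Blue responds on both boards, treating $h$ as Red's first stone on its board and replying with $\sigma$ thereafter, again winning both --- or to play elsewhere, in which case Second grabs $r$ and completes $\{r,b\}$. In every branch Second wins, so no opening other than $b$ is winning for First.

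The main obstacle is the two-board sub-lemma together with the tempo bookkeeping it requires. Responding on two interleaved boards is unproblematic when the Blue side is the pure responder, but the opening exchanges force slightly off-parity situations: in the winning case Blue must move first (one extra tempo), and in one uniqueness branch Blue must cope with a single pre-placed Red stone $h$. The crux is to argue that neither perturbation breaks the ability to connect both boards --- an extra Blue tempo is discharged as a harmless throwaway and a lone extra Red stone is absorbed as Red's opening move on that board --- both justified by Hex monotonicity and the strategy-stealing principle of \Cref{thm:hexwin}. A final point to handle carefully is that a pure responder makes no progress on a board the opponent ignores; one argues from finiteness and draw-freeness that the opponent is eventually forced onto each board, whereupon Blue's responder strategy connects it (and, by the Hex Theorem, denies Red that board), so the game terminates with Blue holding connections on both boards.
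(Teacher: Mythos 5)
Your proposal is correct and follows essentially the same route as the paper's proof: show $b$ wins via the forced $r$ reply followed by a board-by-board Blue responder (pairing) strategy, and show every other opening loses by having Second answer with $b$ and case-split on whether First took $r$ or a hexagon. The only difference is that you make explicit the tempo/extra-stone bookkeeping (throwaway moves, absorbing a pre-placed stone, exhaustion of one board) that the paper handles implicitly via the standard ``an extra move never hurts in Hex'' convention.
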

\begin{proof}
We first show that claiming $b$ is a winning move for First.
In response, Second is forced to claim $r$ to block the winning set $\{r, b\}$.
First then claims an arbitrary hexagon, and then each time Second claims a hexagon on $Q_1$ or $Q_2$, First claims a hexagon on the same board to execute a winning strategy for Blue.
Thus, First will have $b$ and also a winning set for Blue on both boards, meaning First wins in $G$.
(Note that Second will be unable to ever obtain a red winning set on either board, since it is not possible for both sides to obtain a winning configuration on any individual Hex board.) 

We then show that, if First does not claim $b$ with their first move, then it is a winning response for Second to claim $b$.
Here we consider two cases.
If First claims $r$, then after Second claims $b$, in each subsequent turn, each time First claims a hexagon on $Q_1$ or $Q_2$, Second can claim a hexagon on the same board to execute a winning strategy for Blue on that board, thus obtaining a winning set for Blue on both boards and hence winning in $G$.
In the other case, if First claims a hexagon in (say) $Q_1$ with their first move, then after Second claims $b$, First must immediately claim $r$ to block the winning set $\{r, b\}$.
Second then claims a hexagon on $Q_1$ and from here this case reduces to the first one.
\end{proof}

\begin{lemma} \label{lem:redwincase}
If Red has a winning strategy in $Q$, then it is not a winning move for First in $G$ to claim $b$.
\end{lemma}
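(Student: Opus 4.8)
The plan is to exhibit a winning response for Second to the opening move $b$, which by the draw-freeness of $G$ immediately shows that $b$ cannot be a winning move for First. First I would observe that after First claims $b$, Second is \emph{forced} to claim $r$: otherwise First claims $r$ on the next turn and completes the winning set $\{r,b\}$. So the real content is to show that from the resulting position (First holds $b$, Second holds $r$, First to move, all hexagons still in play) Second can force a Red win on at least one of the two boards; any such Red configuration together with $r$ is a winning set for Second.

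Here the asymmetry deliberately built into the winning sets is what I would exploit. To win, First (holding $b$) must complete a Blue configuration on \emph{both} boards, whereas Second (holding $r$) needs a Red configuration on only \emph{one} board. Since Red has a winning strategy in $Q$ with Red to move, I would have Second run that strategy on a single target board and use the other board to spend waiting moves. Concretely, let First's first hexagon move (the move right after the $b/r$ exchange) lie on some board $Q_j$, and designate the still-untouched board $Q_k$ as Second's target. Second replies by playing Red's winning first move on $Q_k$; since no hexagon of $Q_k$ has yet been claimed, this faithfully simulates ``Red to move first'' in $Q$. Thereafter, whenever First plays Blue on $Q_k$, Second answers on $Q_k$ according to Red's winning strategy, and whenever First plays on $Q_j$, Second also plays on $Q_j$ as a harmless waiting move.

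The main obstacle, and the point needing care, is tempo: Second must keep following a \emph{Red-moves-first} strategy on $Q_k$ even though First moves first after the exchange, and even when the waiting board $Q_j$ fills up and forces extra moves onto $Q_k$. Two observations resolve this. First, global alternation guarantees that every Blue move on $Q_k$ is immediately followed by a Second move, so Second can always answer a $Q_k$ threat at once and Blue never gets two consecutive unanswered moves there; since Second opened $Q_k$, the move order on $Q_k$ is exactly the Red-first alternation the winning strategy expects. Second, when $Q_j$ is exhausted and Second is forced onto $Q_k$ out of turn, such a move is merely an \emph{extra} Red stone, and extra stones never hurt a Maker. I would make this precise with the standard shadow-game argument: maintain a clean copy of the winning play on $Q_k$ together with the invariant that the real board's Red stones contain the shadow's Red stones while the Blue stones coincide, and whenever a prescribed strategy move is already occupied by a bonus Red stone, replace it by an arbitrary other empty hexagon.

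Finally I would tie the pieces together. Because Red's strategy wins in $Q$, in the shadow game Red completes a Red Hex path on $Q_k$, so by the containment invariant the real board $Q_k$ also contains a Red path; by the Hex Theorem Blue then does \emph{not} win on $Q_k$, so First cannot obtain Blue on both boards, while Second holds $r$ together with a Red win on $Q_k$, a winning set. Hence Second wins and $b$ is not a winning move for First. This dovetails with Lemma \ref{lem:bluewincase}: in the Red-win case the decisive move for First is instead $r$, and the two lemmas together let the winning move in $G$ decode the winner of $Q$.
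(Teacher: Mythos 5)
Your proof is correct and follows essentially the same approach as the paper: Second answers $b$ with $r$, then plays a Red winning strategy (moving first) on the board First did not touch, winning via $r$ plus a Red set on one board while First can never get Blue on that board. The only difference is scheduling --- the paper has Second ignore the other board entirely rather than mirror First there --- and your explicit shadow-game/extra-stone argument just makes precise a step the paper leaves implicit.
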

\begin{proof}
Suppose that First claims $b$.
The winning response for Second is to claim $r$.
Without loss of generality, First then claims a hexagon in $Q_1$.
Second then decides to permanently ignore $Q_1$ and focus entirely on $Q_2$, claiming exclusively hexagons in $Q_2$ for the rest of the game.
Since Second is the first to move on $Q_2$, they can execute a winning strategy for Red on $Q_2$.
Thus Second will eventually claim $r$ and a winning set for Red on $Q_2$, meaning that Second wins in $G$.
(Note, again, that First cannot possibly obtain a winning set in the meantime, since they cannot possibly hold a winning set for Blue on $Q_2$.)
\end{proof}



We now put the pieces together: after constructing the game $G$ as described above, from Lemmas \ref{lem:bluewincase} and \ref{lem:redwincase} we have
$$\smmmove(G) = b \qquad \text{if and only if} \qquad \lnot \dechex(Q).$$
Since $\dechex$ is $\PSPACE$-complete, it follows that $\smmmove$ is $\PSPACE$-hard.
\end{proof}

\subsection{Minimum Poset Games}

Next, we prove Theorem \ref{thm:mphard}.
Our starting point is:

\begin{definition} [$\decpos$]
The problem $\decpos$ is defined as follows:
\begin{itemize}
    \item Input: a poset $P$, described by explicitly listing its elements and the relations between them.
    \item Output: is the poset game (see Definition \ref{def:posetgame}) associated to $P$ a win for the first player under best play?
\end{itemize}
\end{definition}

\begin{theorem} [\cite{posetpspace}] \label{thm:posetpspace}
$\decpos$ is $\PSPACE$-complete.
\end{theorem}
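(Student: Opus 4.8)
Since the theorem asserts $\PSPACE$-\emph{completeness}, there are two things to establish: membership in $\PSPACE$, which is routine, and $\PSPACE$-hardness, which is the substance. For membership, observe that every legal move deletes at least one element (the chosen element itself), so the game lasts at most $|P|$ rounds and its game tree has depth at most $|P|$. A position of the game is exactly an order filter of $P$: the set of elements not yet removed is upward closed, since choosing $x$ removes $x$ together with everything below it, and a filter-minus-a-principal-ideal is again a filter. Each such position is described in space $O(|P|)$ by listing its minimal elements. I would then run the standard alternating min-max recursion: the empty poset is a loss for the player to move, and any other position is a win for the player to move iff some legal move leads to a position that is a loss for whoever moves next. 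The recursion has depth $O(|P|)$ and each frame stores a filter of size $O(|P|)$, so the whole computation uses polynomial space, giving $\decpos \in \PSPACE$.

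For hardness I would reduce from a canonical $\PSPACE$-complete two-player game, such as the quantified-Boolean-formula game ($\tqbf$ read as alternating existential/universal variable assignments) or Generalized Geography. The plan is to build, from a formula $\Phi = \exists x_1\, \forall x_2\, \exists x_3 \cdots\, \psi$, a poset $P_\Phi$ whose associated game under normal (last-move-wins) play is a first-player win if and only if $\Phi$ is true. Because positions are filters, a move amounts to selecting a currently exposed minimal element and deleting its down-set; I would engineer a layered poset in which the exposed elements at each stage are exactly the literal-choices available to the player whose turn it is, so that deleting one literal's down-set simultaneously commits to a truth value and exposes the next quantifier block. The reduction would attach clause-checking gadgets below the literal elements so that the final parity of available moves records whether $\psi$ is satisfied by the chosen assignment.

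The main obstacle, and what makes this genuinely hard rather than a cosmetic re-encoding, is that poset games are \emph{impartial}: both players always have the identical set of legal moves, the only operation is the rigid deletion of a principal down-set, and there is no native mechanism to mark a move as ``belonging to'' the existential or universal player or to force a player to respect the intended clause structure. The core technical task is therefore to design variable, clause, and parity/padding gadgets whose down-set structure pins play onto a single canonical line of ``set a truth value, then hand control to the opponent's quantifier block,'' so that \emph{any} deviation from this intended behavior is punished by handing the opponent a short, forced winning continuation. Proving that every off-line move is strictly losing, and simultaneously calibrating the move-count parity so that the last-player-wins outcome coincides exactly with the truth of $\Phi$, is where essentially all the difficulty lies; this is precisely the crux established in \cite{posetpspace}, and I would expect to spend the bulk of the argument verifying these gadget correctness and parity claims rather than on the overall reduction skeleton.
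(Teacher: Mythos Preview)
The paper does not prove this theorem; it is imported wholesale from \cite{posetpspace} and used as a black box for the reduction establishing Theorem~\ref{thm:mphard}. There is therefore no proof in the present paper to compare your proposal against.

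On its own merits: your $\PSPACE$ membership argument is correct and standard. For hardness, your outline of a direct reduction from $\tqbf$ is a coherent plan, and you have correctly identified the real obstacle (impartiality of poset games, rigidity of down-set deletion, and parity calibration). That said, the actual proof in \cite{posetpspace} takes a shorter route: it reduces not from $\tqbf$ but from \textsc{Node Kayles}, an impartial graph game already known to be $\PSPACE$-complete. Starting from an impartial source game avoids much of the partisan-to-impartial gadgetry you anticipate needing; the construction there is a simple three-level poset built from the \textsc{Node Kayles} graph, with a comparatively brief correctness argument. Your proposed direct $\tqbf$ encoding would likely work in principle but would be substantially more labor-intensive than what \cite{posetpspace} actually does.
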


We then argue:
\begin{proof} [Proof of Theorem \ref{thm:mphard}]
Given a poset game defined by $P$, generate a new poset $P'$ by adding a new element $m$, defined to be less than every other element in $P$.
We now argue that $\mpmove(P') = m$ if and only if the original poset game defined by $P$ was a win for the second player:
\begin{itemize}
    \item Suppose $P$ is a win for the first player.  If in $P'$ the first player claims $m$ with their first move, then the game becomes equivalent to $P$ with the turn order reversed.
    Thus claiming $m$ is a losing move for the first player, and so $\mpmove(P') \ne m$.
    
    \item Suppose $P$ is a win for the second player.  If in $P'$ the first player claims $m$ with their first move, then again the game is equivalent to $P$ with the turn order reversed, so the first player has a winning strategy.
    This means we \emph{can} have $\mpmove(P') = m$, but since $\mpmove$ might return any winning move, we also need to rule out the possibility that any other move is winning.
    For this, we observe that any other first move in $P'$ necessarily removes $m$ and at least one other element from $P'$, thus giving a position that can possibly be obtained after one move in $P$.
    Since $P$ is a win for the second player, any such position must be losing for the player who creates it, and thus we have $\mpmove(P') \notin P' \setminus \{m\}$, so $\mpmove(P') = m$.
\end{itemize}
This completes the reduction from $\decpos$ to $\mpmove$, and thus $\mpmove$ is $\PSPACE$-hard.
\end{proof}

\section{Open Questions}

We conclude by listing some conceptual open questions left by this work.

\paragraph{TFPSPACE.}

Can we similarly analyze the computational properties of other interesting non-constructive proof techniques that lie outside of $\NP$?  Is there a satisfying theory of $\TFPSPACE$, in analogy with $\TFNP$?
    
\paragraph{Bounded Computational Power.}

The existence of a winning strategy in a game does not necessarily shed much light on how \emph{computationally bounded} players would play the game.
To illustrate, consider the following game: player 1 declares a circuit $C$ of their choice.
Then, player 2 wins if they can declare an input $x$ such that $C(x) = 1$.
Then, player 1 then wins if \emph{they} can declare an input $x$ with $C(x)=1$.
If both players fail to declare such an input $x$, then player 2 wins.
Here, there is clearly a winning strategy for the second player: if there exists an $x$ such that $C(x) = 1$, then declare that $x$ and win immediately; if there is no such $x$ then player 2 also wins.
However, if the players are represented by Turing machines that can only run for a polynomial amount of time, then the game is (probably) a win for player 1: for example, player 1 can pick a one way function $f$, compute it on some random $x'$ of his choice to get output $y$, and then have the circuit $C$ output $1$ on all $x$ such that $f(x) = y$.
Under standard cryptography assumptions, player 2 will be unable to find such an $x$, and then player 1 will win in the next turn by declaring the $x$ used to create the circuit.

Interestingly, for draw-free SMM games, such situations will not arise: even for computationally bounded players, it is preferable to play first, since playing an extra move is never disadvantageous.
In contrast, this is not clearly true for poset games, where the wrong first move can possibly throw the game.
Hence, this might be an interesting avenue to separate the computational properties of these two strategy-stealing arguments (since they are both $\FPSPACE$-hard under ``best play,'' i.e., unbounded computational power).
More generally, it would be interesting to further understand and formalize the effects of bounded computational power on various existence proofs for winning strategies in combinatorial game theory.
    
\paragraph{Generalized Strategy-Stealing.}

Many strategy-stealing arguments can be viewed as a reduction of the game tree to a tautology.
To illustrate, the game tree of Chomp may be phrased as follows.
Let $X$ be the subgame tree from the starting chocolate bar with the bottom-left-most square removed.
The proof of Theorem \ref{thm:chompwin} essentially observes that the Chomp game tree is equal to $X \text{ or } \lnot X$, which is true as a formula (meaning a win for the first player) regardless of the value of $X$ (see Figure \ref{fig:chompreduction}).
 
\begin{figure}[H]
    \centering
    \begin{tikzpicture}
    \draw [fill=black] (6, 0) circle [radius=0.15];
    \node at (6, 0.5) {\textbf{Starting Position}};
    \draw [ultra thick] (6, 0) -- (4, -1);
    \draw [fill=black] (4, -1) circle [radius=0.15];
    \node [align=center]at (3.5, -0.3) {(chomp bot-left\\square only)};
    \draw [ultra thick, dashed] (4, -1) -- (3, -3) -- (5, -3) -- cycle;
    \node at (4, -2.3) {\Huge $\mathbf{x}$};
    \draw [thick] (6, -0.5) -- (9, -0.5);
    \node [fill=white, align=center] at (9, -0.5) {(all other possible\\first moves)};
    
    \draw [ultra thick, dashed] (6, 0) -- (5, -2) -- (7, -2) -- cycle;
    \node at (6, -1.3) {\Huge $\mathbf{x}$};
    
    \end{tikzpicture}
    \caption{Due to the symmetry in the game tree of Chomp illustrated here, the value of the game tree can be expressed as $X \text{ or } \lnot X$, where $X$ is the game tree after a $1 \times 1$ square has been chomped.}
    \label{fig:chompreduction}
\end{figure}
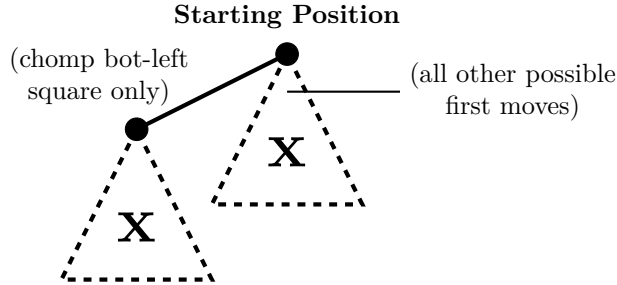

\noindent Naturally, a reduction of the game tree to \emph{any} tautology implies a first-player win, including more complicated tautologies in which multiple variables are assigned to multiple subgames.
We might call this type of argument \emph{generalized strategy-stealing}, as it extends the usual proofs in the literature that use only $X \text{ or } \lnot X$.
For any given tautology it is easy enough to invent an artificial game that admits a generalized strategy-stealing proof via that tautology.
However, it would be interesting to find a ``natural'' game that admits a generalized strategy-stealing proof, using a tautology formally distinct from $X \text{ or } \lnot X$.

\paragraph{Hardness for Specific Games.} While we have proved hardness for finding winning moves in game classes that include Hex and Chomp, our results do not imply hardness for Hex and Chomp specifically.
In particular, it would be interesting to determine whether or not the following problem is in $\mathsf{FP}$: given game board dimensions for Hex or Chomp (or basically any other game mentioned in this paper), written in unary, output a winning move for the first player.
This problem is in $\mathsf{\FPSPACE}$, but it will not be readily possible to prove it $\mathsf{\FPSPACE}$-hard for the following reason: it is known that no unary language can be $\NP$-complete unless $\mathsf{P} = \NP$; thus, a unary language complete for $\PSPACE$ would imply that $\mathsf{P} = \NP$ or $\NP \ne \PSPACE$, which is not known and would constitute a breakthrough in complexity theory.
Thus it is unclear what hardness notion should be used to approach this question.

\paragraph{Other Notions of Constructiveness.} We have proved that it is generally hard to find a winning move in a game, even when strategy-stealing arguments apply.
Finding a good first move is one natural formalization of ``constructiveness'' in $\PSPACE$, but there are others.
For example, here is an open question that we have not addressed: for (say) the game Hex, does there necessarily exist a polynomial-size circuit that plays the game optimally, even if it is computationally hard to find the circuit?



    
    

\bibliographystyle{alpha}
\bibliography{bibfile}

\end{document}